\newcommand{\N}{\mathbb{N}}
\newcommand{\R}{\mathbb{R}}
\newcommand{\Lmat}{\mathbf{L}}
\newtheorem{theorem}{Theorem}[section]
\newtheorem{corollary}[theorem]{Corollary}
\newtheorem{lemma}[theorem]{Lemma}
    \newcommand{\pg}[1]{}
    \newcommand{\ab}[1]{}
    \newcommand{\RG}[1]{}
    \newcommand{\sm}[1]{}
    \newcommand{\ms}[1]{}
    \newcommand{\pg}[1]{\color{magenta}[#1]\color{black}~}
    \newcommand{\ab}[1]{\color{cyan}[#1]   \color{black}~}
    \newcommand{\RG}[1]{\color{red}[#1]    \color{black}~}
    \newcommand{\sm}[1]{\color{blue}[#1]   \color{black}~}
    \newcommand{\ms}[1]{\color{green}[#1]   \color{black}~}
\begin{document}

% --- Title
%\title{Repeated Heat Diffusion with Chebyshev polynomials}
\title{%Accelerating 
Fast Multiscale Diffusion on Graphs}
%Multiscale Diffusion on Graphs with Chebyshev polynomials}

% -- Authors, affiliations, etc
\author{
    Sibylle Marcotte,
    Amélie Barbe,
    Rémi Gribonval, %\IEEEmembership{Fellow, IEEE},
    Titouan Vayer, \\
    Marc Sebban,
    Pierre Borgnat, %\IEEEmembership{Member, IEEE},
    Paulo Gon\c{c}alves %\IEEEmembership{Member, IEEE}

\thanks{
    Work supported by the ACADEMICS grant of the IDEXLYON, project of the Universit\'e de Lyon, PIA operated by ANR-16-IDEX-0005.
    }
\thanks{
S. Marcotte is with ENS Rennes. France (email: sibylle.marcotte@ens-rennes.fr). 
A. Barbe, R. Gribonval, T.Vayer and P. Gonçalves are with  Universit\'e de Lyon, Inria, CNRS, ENSL, LIP, Lyon; P. Borgnat is with Universit\'e de Lyon, ENSL, CNRS, Laboratoire de Physique, Lyon, France (emails: first.last@ens-lyon.fr).
M. Sebban is with Univ Lyon, UJM-Saint-Etienne, CNRS, Institut d Optique Graduate School, Laboratoire Hubert Curien, Saint-Etienne, France (email: marc.sebban@univ-st-etienne.fr).
}
}

\markboth{}
{Shell \MakeLowercase{\textit{et al.}}: Bare Demo of IEEEtran.cls for IEEE Journals}
\maketitle

% --- Abstract
\begin{abstract}
Diffusing a graph signal at multiple scales requires computing the action of the exponential of several multiples of the Laplacian matrix. We tighten a bound on the approximation error of truncated Chebyshev polynomial approximations of the exponential, hence significantly improving a priori estimates of the polynomial order for a prescribed error. We further exploit properties of these approximations to factorize the computation of the action of the diffusion operator over multiple scales, thus reducing drastically its computational cost.
\end{abstract}

% --- Keywords
\begin{IEEEkeywords}
    Approximate computing,
    Chebyshev approximation,
    Computational efficiency,
    Estimation error,
    Polynomials
\end{IEEEkeywords}

\IEEEpeerreviewmaketitle

\vspace{-0.5cm}

\section{introduction}
\label{S:introduction}

\IEEEPARstart{T}{he} matrix exponential operator has applications in numerous domains, ranging from time integration of Ordinary Differential Equations~\cite{mattheij2005partial} or network analysis~\cite{de2019analysis} to various simulation problems (like power grids~\cite{zhuang2013power} or nuclear reactions~\cite{pusa2010computing}) or machine learning~\cite{barbe2020graph}.
%In particular, in
In graph signal processing, it appears in the diffusion process of a graph signal -- an analog on graphs of Gaussian low-pass filtering.

Given a graph \(\mathcal{G}\) and its combinatorial Laplacian matrix \(\Lmat\), let \(x\) be a signal  on this graph (a vector containing a value at each node), the \emph{diffusion of \(x\) in \(\mathcal{G}\)} is defined by the equation \(\frac{dw}{d\tau} = -\mathbf{L} \cdot w\) with \(w(0) = x\)~\cite{chung1997spectral}.
It admits a closed-form solution \(w(\tau) = \exp(-\tau \mathbf{L}) x\) involving the \emph{heat kernel} \(\tau \rightarrow \exp(-\tau \mathbf{L})\), which features the matrix exponential.

Applying the exponential of a matrix \(\mathbf{M} \in \mathbb{R}^{n \times n}\) to a vector \(x \in \mathbb{R}^n\) can be achieved by computing the matrix \(\mathbf{B} = \exp(\mathbf{M})\) to compute then the matrix-vector product \(\mathbf{B}x\).
However, this becomes quickly computationally prohibitive in high dimension, as storing and computing \(\mathbf{B}\), as well as the matrix-vector product \(\mathbf{B}x\), have cost at least quadratic in \(n\).
Moreover, multiscale graph representations such as graph wavelets \cite{DBLP:journals/adcm/MehraSL21}, graph-based machine learning methods \cite{barbe2020graph}, rely on graph diffusion at different scales, thus implying applications of the matrix exponential of various multiples of the graph Laplacian. 

To speedup such repeated computations %of the exponential at different scales, 
%we propose 
one can use a well-known technique based on approximations of the (scalar) exponential function using Chebyshev polynomials.
We build on the fact that polynomial approximations \cite{DBLP:journals/siammax/PopolizioS08} can significantly reduce the computational burden of approximating \(\exp(\mathbf{M})x\) with good precision when \(\mathbf{M}=-\tau\mathbf{L}\) where $\mathbf{L}$ is  sparse positive semi-definite (PSD);
%\footnote{\(\mathbf{A}\) is any PSD matrix, not necessarily the adjacency matrix of the graph.}; 
this is often the case when \(\mathbf{L}\) is the Laplacian of a graph when each node is connected to a limited number of neighbors.
The principle is to approximate the exponential as a low-degree polynomial in \(\mathbf{M}\), \(\exp(\mathbf{M}) \approx p(\mathbf{M}) := \sum_{k=0}^K a_k \mathbf{M}^k\).
Several methods exist, some requiring the explicit computation of coefficients associated with a particular choice of polynomial basis, others, including Krylov-based techniques, not requiring explicit evaluation of the coefficients but relying on an iterative determination \cite{DBLP:journals/jcam/BotchevK20} of the polynomial approximation  on the subspace spanned by \(\left\{ x, \mathbf{M}x, \cdots, \mathbf{M}^K x \right\}\).

% Our contributions
Our contribution is twofold.
First, we devise a new bound on the %Mean Squared Error 
approximation error of truncated Chebyshev expansions of the exponential, that improves upon existing works~\cite{DRUSKIN1989112,DBLP:journals/nla/BergamaschiV00,mason2002chebyshev}.
This avoids unnecessary computations by determining a small truncation order \(K\) to achieve a prescribed error. Second, we propose to compute \(\exp(-\tau\mathbf{L})\) at different scales \(\tau \in \mathbb{R}\) faster, by reusing the calculations of the action of Chebyshev polynomials on $x$ %\(\mathbf{A}^k x\) 
and combining them with adapted coefficients for each scale $\tau$. % \(a_k(\tau)\).
This is particularly efficient for multiscale problems with arbitrary values of \(\tau\), unlike \cite{DBLP:journals/siamsc/Al-MohyH11} which is limited to linear spacing. 

The rest of this document is organized as follows.
In Section~\ref{S:contribution} we describe univariate function approximation with Chebyshev polynomials, and detail the approximation of scaled univariate exponential functions with new bounds on the coefficients (Corollary~\ref{L:coeffdkbk}). This is used in 
Section~\ref{S:properties} to approximate matrix exponentials with controlled complexity and controlled error (Lemma~\ref{L:uniformbound1}), leading to our new error bounds~\eqref{E:Bound1},~\eqref{E:Bound2},~\eqref{E:Bound3}. Section~\ref{S:experiments} is dedicated to an experimental validation, with a comparison to the state-of-the-art bounds of \cite{DBLP:journals/nla/BergamaschiV00}, and an illustration on multiscale diffusion.
%Section~\ref{S:conclusion} concludes with...

\section{Chebyshev approximation of the exponential}
\label{S:contribution}

% - Definition
The Chebyshev polynomials of the first kind are characterized by the identity \(T_k(\cos(\theta)) = \cos(k\theta)\).
They can be computed as $T_0(t)=1$, $T_1(t) = t$ and using the following recurrence relation:
\begin{align}\label{E:recur_cheb}
    %T_0(t)=1\notag\\
    %T_1(t)&=t\\
    T_{k+2}(t)&= 2tT_{k+1}(t)-T_k(t).%\notag
\end{align}
%
%\subsection{Chebyshev series of a function}
% - Chebyshev series
The \emph{Chebyshev series decomposition} of a function \(f : [-1,1] \mapsto \R\) is:
%\begin{equation}
%    \label{E:def_chebyshev_series}
%    t \rightarrow 
    $f(t) = \frac{c_0}{2} + \sum_{k \geq 1} c_k \cdot T_k(t)$,
%\end{equation}
where the \emph{Chebyshev coefficients} are:
\begin{equation}
   \label{E:def_chebycheb_coefficients}
    c_k = \frac{2}{\pi} \int_0^{\pi}
            \cos(k\theta) \cdot f(\cos(\theta))
        \mathrm{d}\theta.
\end{equation}
% - How to use them
Truncating this series yields an approximation of \(f\).
For theoretical aspects of the approximation by Chebyshev polynomials (and other polynomial basis) we refer the reader to~\cite{phillips2003interpolation}.

\subsection{Chebyshev series of the exponential}
\label{S:genericsetting}

We focus on approximating the univariate transfer function \(h_{\tau}: \lambda \in [0, 2] \mapsto \exp(-\tau \lambda)\), which will be useful to obtain low-degree polynomial approximations of the matrix exponential $\exp(-\tau \mathbf{L})$ for positive semi-definite matrices whose largest eigenvalue satisfies $\lambda_{\max}= 2$ (see Section~\ref{S:properties}). 

% - Change of variable
Using a change of variable:

$t = (\lambda-1) \in[-1,1]$, $\tilde{h}_{\tau}(t) = h_{\tau}(t+1)$

and the Chebyshev series %~\eqref{E:def_chebyshev_series} 
of \(f := \tilde{h}_\tau\) yields:
\begin{align}
    \tilde{h}_{\tau}(t)
        &= \frac{1}{2} c_0(\tau) + \sum_{k=1}^{\infty} c_k(\tau) T_k(t),\notag \\
    \label{E:diff_approx_coeff}
    % c_k(\tau) &=
    %     \frac{2}{\pi} \int_0^{\pi}
    %         \cos(k\theta) \exp(-\tau \Phi (\cos(\theta)+1))
    %     \mathrm{d}\theta.
    c_k(\tau) &=
        \frac{2}{\pi} \int_0^{\pi}
            \cos(k\theta) \exp(-\tau (\cos(\theta)+1))
        \mathrm{d}\theta.
\end{align}
% - Return to h()
This leads to the following expression for \(h_{\tau}\):
\begin{equation}
    \label{E:diff_approx}
    h_{\tau}(\lambda) = \frac{1}{2} c_0(\tau) + \sum_{k=1}^{\infty} c_k(\tau) \tilde{T}_k(\lambda),
\end{equation}
where for any \(k \in \N\):
%\begin{equation*}
   % \label{E:rescaled_chebyshev_polynomials}
    % \tilde{T}_k(\lambda)= T_k\left(\frac{\lambda- \Phi}{\Phi}\right).
    $\tilde{T}_k(\lambda)= T_k\left( \lambda - 1 \right)$.
%\end{equation}

% - We can truncate the series.
Truncating the series~\eqref{E:diff_approx} to order \(K\) yields a polynomial approximation of \(h_\tau\) of degree $K$ whose quality can be controlled, leading to a control of the error in approximating the action of \(\exp(-\tau \mathbf{L})\) as studied in Section~\ref{S:properties}.
First we focus on how to evaluate the coefficients \(c_k\) defined in Equation~\eqref{E:diff_approx_coeff}.

% --- How to compute the coefficients?
% \subsection{Computing the coefficients}
\subsection{Chebyshev coefficients of the exponential operator}

% - Using Bessel functions
Evaluating numerically the coefficients using the integral formulation ~\eqref{E:diff_approx_coeff} would be computationally costly, fortunately they are expressed using Bessel functions~\cite{abramowitz1988handbook}:
\begin{equation}
    \label{E:coeff_as_bessel_fun}
    % c_k(\tau) = 2I_k(\tau \Phi) \cdot \exp(-\tau \Phi) = 2 \cdot {Ie}_k(-\tau  \Phi),
    c_k(\tau) = 2I_k(\tau) \cdot \exp(-\tau) = 2 \cdot {Ie}_k(-\tau),
\end{equation}
with \(I_k(\cdot)\) the modified Bessel function of the first kind and \({Ie}_k(\cdot)\) the exponentially scaled modified Bessel function of the first kind.

The following lemma applied to \(f = \tilde{h}_\tau\) yields another expression of the coefficients \eqref{E:diff_approx_coeff}, which will be used to bound the error of the truncated Chebyshev expansion.
\begin{lemma}[\cite{phillips2003interpolation}, Equation~2.91]
    \label{L:philips_coeff}
    Let \(f\) be a function expressed as an infinite power series \(f(t)= \sum_{i=0}^{\infty} a_i t^{i}\) and assume that this series is uniformly convergent on \([-1, 1]\).
    Then, we can express the Chebyshev coefficients of \(f\) by:
    \begin{equation}
    %    \label{E:philips_coeff}
        c_k = \frac{1}{2^{k-1}} \sum_{i=0}^{\infty} \frac{1}{2^{2i}} \binom{k+2i}{i} a_{k+2i}.
    \end{equation}
\end{lemma}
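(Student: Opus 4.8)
The plan is to start from the integral definition~\eqref{E:def_chebycheb_coefficients} of the Chebyshev coefficients applied to $f$, substitute the power series $f(t) = \sum_{i=0}^{\infty} a_i t^i$, and exploit the assumed uniform convergence on $[-1,1]$ to interchange summation and integration. Writing $t = \cos\theta$, this reduces the statement to evaluating the elementary integrals
\[
    J_{k,i} := \frac{2}{\pi}\int_0^\pi \cos(k\theta)\cos^i\theta\,\mathrm{d}\theta
\]
and then resumming $c_k = \sum_{i\geq 0} a_i\, J_{k,i}$.

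To compute $J_{k,i}$ I would linearize the power of the cosine via the binomial identity
\[
    \cos^i\theta = \frac{1}{2^i}\sum_{j=0}^i \binom{i}{j}\cos\big((i-2j)\theta\big),
\]
obtained by expanding $\big((e^{i\theta}+e^{-i\theta})/2\big)^i$. The orthogonality relation for $\int_0^\pi \cos(k\theta)\cos(m\theta)\,\mathrm{d}\theta$ (equal to $\pi$ if $k=m=0$, to $\pi/2$ if $k=m\neq 0$, and to $0$ otherwise, for nonnegative integers) then selects only the terms with $|i-2j|=k$. This forces $i$ and $k$ to share the same parity and $i\geq k$; setting $i = k+2\ell$, the two admissible indices $j=\ell$ and $j=k+\ell$ each contribute $\binom{k+2\ell}{\ell}$ (equal by the symmetry $\binom{n}{r}=\binom{n}{n-r}$), so $J_{k,i}$ vanishes unless $i=k+2\ell$, in which case it equals $2^{1-k-2\ell}\binom{k+2\ell}{\ell}$.

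Substituting back and reindexing the surviving terms by $\ell$ yields $c_k = \sum_{\ell\geq 0} a_{k+2\ell}\,2^{1-k-2\ell}\binom{k+2\ell}{\ell}$, which is exactly the claimed formula after factoring out $2^{1-k}=1/2^{k-1}$ and writing $2^{-2\ell}=1/2^{2\ell}$. I would verify the $k=0$ case separately for consistency: there only the single index $j=\ell$ survives, but the orthogonality factor is $\pi$ rather than $\pi/2$, so the missing factor of two is recovered and the formula still holds verbatim.

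The main obstacle I anticipate is not the combinatorics but the analytic justification of swapping $\sum_i$ and $\int_0^\pi$: this is precisely where the uniform-convergence hypothesis is indispensable, as it lets me apply the uniform-convergence (interchange) theorem on the compact interval $[0,\pi]$. Everything else is a finite-sum manipulation. An alternative route that bypasses the integral is to expand each monomial directly in the Chebyshev basis, $t^i = 2^{-i}\sum_{j=0}^i \binom{i}{j} T_{|i-2j|}(t)$, using $T_m(\cos\theta)=\cos(m\theta)$, and to collect the coefficient of $T_k$; this reproduces the same identity but requires extra care with the convention that the constant term enters the series as $c_0/2$.
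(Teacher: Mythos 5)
Your argument is correct. The paper itself gives no proof of this lemma --- it simply imports the identity from Phillips (Eq.~2.91) --- so there is no in-paper derivation to compare against; your write-up supplies a valid self-contained justification. The key steps all check out: uniform convergence of $\sum_i a_i t^i$ on $[-1,1]$, combined with boundedness of $\cos(k\theta)$, legitimizes the term-by-term integration of \eqref{E:def_chebycheb_coefficients}; the linearization $\cos^i\theta = 2^{-i}\sum_{j=0}^i\binom{i}{j}\cos((i-2j)\theta)$ is the standard consequence of expanding $\bigl((e^{\mathrm{i}\theta}+e^{-\mathrm{i}\theta})/2\bigr)^i$; and the orthogonality bookkeeping is right, including the only delicate point, namely that for $k\geq 1$ the two indices $j=\ell$ and $j=k+\ell$ are distinct and each contributes $\tfrac{\pi}{2}\binom{k+2\ell}{\ell}$, while for $k=0$ a single index survives but the integral doubles to $\pi$, so the closed form $J_{k,k+2\ell}=2^{1-k-2\ell}\binom{k+2\ell}{\ell}$ holds uniformly in $k$. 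As a sanity check consistent with the rest of the paper, applying your formula to $a_i=(-2C)^i e^{-2C}/i!$ and using $\binom{k+2\ell}{\ell}/(k+2\ell)! = 1/(\ell!\,(k+\ell)!)$ recovers exactly the Bessel-function expression \eqref{E:coeff_as_bessel_fun} via the series for $I_k$, so the result is confirmed independently.
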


% Corollary: expression for the exponential
\begin{corollary}
    \label{L:coeffdkbk}
    % Consider \(\tilde{h}_\tau(t):=\exp(-\tau\Phi(t+1))\), \(t \in [-1,1]\), where we recall that \(\Phi=\nicefrac{\lambda_{\max}}{2}\).
    Consider \(\tilde{h}_\tau(t):=\exp(-\tau(t+1))\), \(t \in [-1,1]\).
    The coefficients of its Chebyshev expansion satisfy:
    \begin{align}
        c_k &=
            (-1)^k d_k \bar{c}_k
            \label{E:ckdk}\\
        % \bar{c}_k &=
        %     2 \left(\frac{\tau\Phi}{2}\right)^k \exp(-\tau\Phi) (k!)^{-1}
        %     \label{E:cbark}\\
        \bar{c}_k &=
            2 \left(\nicefrac{\tau}{2}\right)^k \exp(-\tau) (k!)^{-1}
            \label{E:cbark}\\
        % d_k &=
        %     \sum_{i=0}^\infty \left(\frac{\tau\Phi}{2}\right)^{2i} \frac{k!}{i!(k+i)!}.
        %     \label{E:dk}
        d_k &=
            \sum_{i=0}^\infty \left(\nicefrac{\tau}{2}\right)^{2i} \frac{k!}{i!(k+i)!}.
            \label{E:dk}    \end{align}
    Moreover we have:
    \begin{equation}
        \label{E:dk_bound}
        1 \leq d_k \leq \min\left(
            \exp\left(\frac{
                % (\tau\Phi/2)^2}{
                (\nicefrac{\tau}{2})^2}{
                k+1}\right)
            % ,\cosh(\tau\Phi)
            ,\cosh(\tau)
        \right).
    \end{equation}
\end{corollary}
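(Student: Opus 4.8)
The plan is to obtain the product identity \eqref{E:ckdk} by feeding the explicit power series of $\tilde h_\tau$ into Lemma~\ref{L:philips_coeff}, and then to derive the two upper bounds in \eqref{E:dk_bound} by elementary term-by-term majorizations of the series \eqref{E:dk}.

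First I would write $\tilde h_\tau(t) = \exp(-\tau)\exp(-\tau t) = \exp(-\tau)\sum_{i=0}^{\infty} \tfrac{(-\tau)^i}{i!} t^i$, which identifies the power-series coefficients $a_i = \exp(-\tau)(-\tau)^i/i!$. Since this series converges uniformly on $[-1,1]$, the hypotheses of Lemma~\ref{L:philips_coeff} hold. Substituting $a_{k+2i}$ into the lemma and using $\binom{k+2i}{i} = (k+2i)!/(i!(k+i)!)$, the factorials $(k+2i)!$ cancel; pulling out $(-\tau)^k$ and using $(-\tau)^{2i}=\tau^{2i}$ leaves $c_k = (-1)^k \tfrac{\exp(-\tau)\tau^k}{2^{k-1}}\sum_{i\ge 0}\tfrac{(\tau^2/4)^i}{i!(k+i)!}$. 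The bookkeeping is then routine: writing $2^{-(k-1)}=2\cdot 2^{-k}$ supplies the leading factor $2$, the $2^{-k}$ merges with $\tau^k$ to give $(\nicefrac{\tau}{2})^k$, and $\tau^{2i}/4^i=(\nicefrac{\tau}{2})^{2i}$. Factoring $1/k!$ out of the sum and into $\bar c_k$ reproduces exactly \eqref{E:cbark}--\eqref{E:dk}.

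For the bounds, the lower bound $d_k\ge 1$ is immediate, since the $i=0$ term of \eqref{E:dk} equals $1$ and every remaining term is nonnegative. For the first upper bound I would use $k!/(k+i)! = \prod_{j=1}^{i}(k+j)^{-1}\le (k+1)^{-i}$, whence $d_k \le \sum_{i\ge 0}\tfrac{1}{i!}\bigl((\nicefrac{\tau}{2})^2/(k+1)\bigr)^i = \exp\bigl((\nicefrac{\tau}{2})^2/(k+1)\bigr)$. For the $\cosh(\tau)=\sum_{i\ge 0}\tau^{2i}/(2i)!$ bound I would compare the two series term by term: it suffices to check $4^{-i}\,k!/(i!(k+i)!)\le 1/(2i)!$, i.e. $\tfrac{(2i)!}{4^i\,i!}\cdot\tfrac{k!}{(k+i)!}\le 1$.

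The main — and essentially only — obstacle is this last inequality, which I would close with two standard combinatorial facts: the central binomial bound $\binom{2i}{i}\le 4^i$ gives $(2i)!/(4^i\,i!)\le i!$, and $\binom{k+i}{i}\ge 1$ gives $i!\,k!/(k+i)! = 1/\binom{k+i}{i}\le 1$; multiplying the two yields the claim for every $k,i\ge 0$. Taking the minimum of the two upper bounds completes \eqref{E:dk_bound}.
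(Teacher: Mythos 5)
Your proposal is correct and follows essentially the same route as the paper: both expand $\tilde h_\tau$ as a power series, apply Lemma~\ref{L:philips_coeff} to get the product form $c_k=(-1)^k\bar c_k d_k$, and obtain the two upper bounds from $k!/(k+i)!\leq\min\bigl(1/i!,(k+1)^{-i}\bigr)$ together with the central binomial bound $\binom{2i}{i}\leq 4^i$. Your term-by-term comparison with $\cosh(\tau)$ is just a rearrangement of the paper's chain $d_k\leq\sum_i C^{2i}/(i!)^2\leq\sum_i(2C)^{2i}/(2i)!=\cosh(2C)$.
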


% Proof of the corollary
\begin{proof}
% Denoting \(C = \nicefrac{\tau\Phi}{2}\) we expand \(f(t)=\tilde{h}_\tau(t) = \exp(-2C(t+1)) = \exp(-2C)\exp(-2Ct))\) into a power series:
Denoting \(C = \tau/2\), we expand \(f(t)=\tilde{h}_\tau(t) = \exp(-2C(t+1)) = \exp(-2C)\exp(-2Ct)\) into a power series:
\begin{equation*}
%    \label{E:f_as_power_series}
    f(t) = \sum_{i=0}^{\infty} \exp(- 2C) \frac{(-2C)^{i}}{i!} t^{i}.
\end{equation*}
Using Lemma~\ref{L:philips_coeff}, we obtain for each \(k \in \N\):
\begin{equation*}
    c_k = (-1)^{k} C^{k} 2\exp(-2C) \sum_{i=0}^{\infty} C^{2i} \frac{1}{i!(k+i)!} = (-1)^k \bar{c}_k d_k.
\end{equation*}

For any integers \(k,i\) we have \(k!/(k+i)! \leq \min(1/i!, 1/(k+1)^i)\) and  \(1/(i!)^2= \binom{2i}{i}/(2i)! \leq 2^{2i}/(2i)!\) hence
\begin{align*}
    %\label{E:d_k_bound}
    d_k
        &= \sum_{i=0}^{\infty} \frac{C^{2i}}{i!} \frac{k!}{(k+i)!}\\
        &\leq \min\left(
            \sum_{i=0}^{\infty} \frac{C^{2i}}{i!} \frac{1}{(k+1)^i},
            \sum_{i=0}^{\infty} \frac{C^{2i}}{i!i!}
        \right)\\
        &\leq \min\left(
            \exp\left(C^2/(k+1)\right),
            \sum_{i=0}^{\infty} \frac{C^{2i}2^{2i}}{(2i)!}
        \right)\\
        &= \min\left(
            \exp\left(C^2/(k+1)\right),
            \cosh(2C)
        \right).\qedhere
% &\leq \min\left(\exp\left(C^2/(k+1)\right),\exp(2C) \right).
\end{align*}
\end{proof}

%\section{Approximation bounds and complexity}
\section{Approximation of the matrix exponential}
\label{S:properties}

% --- Matrix functions
The extension of a univariate function \(f: \mathbb{R} \to \mathbb{R}\) to symmetric matrices \(\mathbf{L} \in \mathbb{R}^{n \times n}\) exploits the eigen-decomposition \(\mathbf{L} = \mathbf{U}\boldsymbol{\Lambda}\mathbf{U}^\top\), where \(\boldsymbol{\Lambda} = \mathtt{diag}(\lambda_i)_{1\leq i \leq n}\), to define the action of \(f\) as \(f(\mathbf{L}) := \mathbf{U}\mathtt{diag}(f(\lambda_i)) \mathbf{U}^\top\).
When \(f(t) = t^k\) for some integer \(\)k, this yields \(f(\mathbf{L}) = \mathbf{L}^k\), hence the definition matches with the intuition when \(f\) is polynomial or analytic.

The exponential of a matrix could be computed by taking the exponential of the eigenvalues, but 
diagonalizing the matrix would be computationally prohibitive. % (cubic complexity in the matrix's size).
However computing a matrix such as \(\exp(-\tau\mathbf{L})\) is rarely required, as one rather needs to compute its \emph{action} on a given vector. %signal
This enables faster methods, notably using polynomial approximations: given a square symmetric matrix \(\mathbf{L}\) and a univariate function \(f\), a suitable univariate polynomial \(p\) is used to approximate \(f(\mathbf{L})\) with \(p(\mathbf{L})\).
Such a polynomial can depend on both \(f\) and \(\mathbf{L}\).
When the function \(f\) admits a Taylor expansion, a natural choice for \(p\) is a truncated version of the Taylor series~\cite{DBLP:journals/siamsc/Al-MohyH11}.
Other polynomial bases can be used, such as the Padé polynomials, or in our case, the Chebyshev polynomials~\cite{DBLP:journals/nla/BergamaschiV00,DBLP:conf/dcoss/ShumanVF11} (see~\cite{DBLP:journals/actanum/HighamA10} for a survey), leading to approximation errors that decay exponentially with the polynomial order $K$.

\subsection{Chebyshev approximation of the matrix exponential}
Consider \(\mathbf{L}\) any \emph{PSD matrix} of largest eigenvalue \(\lambda_{\max} = 2\) (adaptations to matrices with arbitrary largest eigenvalue will be discussed in the experimental section). To approximate the action of \(\exp(-\tau \mathbf{L})\), where \(\tau \geq 0\), we use the matrix polynomial $p_K(\mathbf{L})$ where $p_K(\lambda)$ is the polynomial obtained by truncating the series~\eqref{E:diff_approx}. The truncation order \(K\) offers a compromise between computational speed and numerical accuracy. The recurrence relations~\eqref{E:recur_cheb} on Chebyshev polynomials yields recurrence relations to compute \(\tilde{T}_k(\mathbf{L})x=T_k(\mathbf{L}-\mathbf{Id})x\).
Given a polynomial order \(K\), computing $p_K(\mathbf{L})x$ requires \(K\) matrix-vector products for the polynomials, and \(K+1\) Bessel function evaluations for the coefficients.
This cost is dominated by the \(K\) matrix-vector products, which can be very efficient if $\mathbf{L}$ is a sparse matrix.

\subsection{Generic bounds on relative approximation errors}

Denote \(p_K\) the polynomial obtained by truncation at order \(K\) of the Chebyshev expansion~\eqref{E:diff_approx}.
For a given input % 
vector \(x \neq 0\), one can measure a relative error as:
\begin{equation}
    \label{E:relativeerror1}
    \epsilon_K(x) := \frac{
        \|\exp(-\tau\Lmat)x - p_K(\Lmat)x\|_2^2
    }{
        \|x\|_2^2
    }.
\end{equation}
Expressing \(\exp(-\tau\Lmat)\) and \(p_K(\Lmat)\) in an orthonormal eigenbasis of \(\Lmat\) yields a worst-case relative error:
\begin{equation}
    \label{E:worstcaserelativeerror1}
    \epsilon_K
        := \sup_{x \neq 0} \epsilon_K(x)
        = \max_{i} |h_\tau(\lambda_i)-p_K(\lambda_i)|^2
        \leq \|h_{\tau}- p_{K}\|_{\infty}^2
\end{equation}
with $\lambda_i \in [0,\lambda_{\max}]$ 
the eigenvalues of \(\Lmat\) and \(\|g\|_{\infty} := \sup_{\lambda\in [0, \lambda_{\mathrm{max}}]} |g(\lambda)|\).

\begin{lemma}
    \label{L:uniformbound1}
     Consider $\tau \geq 0$, \(h_\tau\) as in Section~\ref{S:genericsetting}, and \(\Lmat\) a PSD matrix with largest eigenvalue \(\lambda_{\max} = 2\).
    Consider \(p_K\) as above where \(K > \tau/2-1\). With \(C:=\nicefrac{\tau}{2}\) we have
    \begin{equation}
        \|h_{\tau}- p_{K}\|_{\infty}
        \leq
        2 e^{\frac{(\tau/2)^2}{K+2}-\tau} \frac{(\tau/2)^{K+1}}{K! (K+1-\tau/2)} =: g(K,\tau).
    \end{equation}
\end{lemma}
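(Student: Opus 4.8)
The plan is to control the uniform error by the tail of the Chebyshev series. Since $p_K$ is the truncation at order $K$ of the expansion~\eqref{E:diff_approx}, the remainder is $h_\tau - p_K = \sum_{k=K+1}^{\infty} c_k(\tau)\,\tilde{T}_k$. On the relevant spectral window $\lambda\in[0,2]$ one has $\lambda-1\in[-1,1]$, so $|\tilde{T}_k(\lambda)| = |T_k(\lambda-1)|\le 1$ from the defining identity $T_k(\cos\theta)=\cos(k\theta)$. The triangle inequality then gives $\|h_\tau-p_K\|_\infty \le \sum_{k=K+1}^{\infty} |c_k|$, reducing the problem to bounding a tail of the coefficients.

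Next I would insert the closed forms from Corollary~\ref{L:coeffdkbk}. With $C=\tau/2$, we have $|c_k| = d_k\,\bar{c}_k$ where $\bar{c}_k = 2C^k e^{-\tau}/k!$ and, crucially, $d_k \le \exp\!\big(C^2/(k+1)\big)$. For every index in the tail, $k\ge K+1$ implies $k+1\ge K+2$, hence $d_k \le \exp\!\big(C^2/(K+2)\big)$ uniformly. Factoring this constant out leaves
\[
\|h_\tau-p_K\|_\infty \;\le\; 2\, e^{C^2/(K+2)-\tau}\sum_{k=K+1}^{\infty} \frac{C^k}{k!}.
\]

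It then remains to bound the exponential tail $\sum_{k=K+1}^{\infty} C^k/k!$. I would reindex with $k=K+1+j$, factor out $C^{K+1}/K!$, and majorize the remaining factors geometrically via $K!/(K+1+j)! \le (K+1)^{-(j+1)}$ for $j\ge0$; this turns the tail into a geometric series of ratio $C/(K+1)$, which converges precisely under the hypothesis $K>\tau/2-1 = C-1$, i.e.\ $C<K+1$. Summing gives $\sum_{k=K+1}^{\infty} C^k/k! \le \frac{C^{K+1}}{K!\,(K+1-C)}$, and substituting back produces exactly $g(K,\tau)$.

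The main obstacle is this last step: the whole estimate hinges on the convergence condition $C<K+1$, which is exactly why the hypothesis $K>\tau/2-1$ is imposed, and on selecting a geometric majorant tight enough to land on the clean denominator $K+1-C$ rather than a looser constant. The $(K+1)^{-(j+1)}$ majorization recovers precisely the stated form $g(K,\tau)$; everything else (the $|\tilde T_k|\le1$ bound, the uniform control of $d_k$, and the geometric summation) is routine.
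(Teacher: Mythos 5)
Your proposal is correct and follows essentially the same route as the paper's proof: bound the tail of the Chebyshev series via $|\tilde T_k|\le 1$, control $d_k$ uniformly by $\exp(C^2/(K+2))$ using $k+1\ge K+2$, and majorize the exponential tail $\sum_{k>K}C^k/k!$ geometrically to obtain $\frac{C^{K+1}}{K!(K+1-C)}$ under the hypothesis $C<K+1$. No gaps; the reindexing you use is just a cosmetic variant of the paper's substitution $\ell=k-K$.
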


\begin{proof}
Denote $C = \tau/2$. For \(K > C-1\) we have:
\begin{align}
    \label{E:tmp1}
    \sum_{k=K+1}^{\infty} \frac{C^{k}}{k!}
        &\leq \frac{1}{K!}\sum_{k=K+1}^{\infty} \frac{C^{k}}{(K+1)^{k-K}} 
        = \frac{C^{K}}{ K!} \sum_{\ell=1}^{\infty} \frac{C^{\ell}}{(K+1)^{\ell}} \notag\\
        &= \frac{C^{K+1}}{ K!(K+1-C)}
\end{align}
and \(C^2/(K+1)<C\) hence for \(k \geq K+1\)~\eqref{E:dk_bound} yields:
\begin{equation}
    \label{E:dk_bound_bis}
    1 \leq d_k \leq \exp(C^2/(K+2)) \leq \exp(C).
\end{equation}

Since $|T_k(t)| \leq 1$ on $[-1,1]$ (recall that $T_k(\cos\theta) = \cos(k\theta)$), we obtain using Corollary~\ref{L:coeffdkbk}:
\begin{align*}
    \|h_{\tau}- p_{K}\|_{\infty}
        &\stackrel{\eqref{E:diff_approx}}{=} \sup_{\lambda\in [0, \lambda_{\mathrm{max}}]} \left| \sum_{k>K}^{\infty} c_k(\tau) \tilde{T}_k(\lambda) \right| 
        \leq \sum_{k>K}^{\infty}\left|d_k \bar{c}_k \right| \\
        &\stackrel{\eqref{E:cbark},\eqref{E:dk_bound_bis}}{\leq} \exp\left( \tfrac{C^{2}}{K+2}\right) 2 \exp\left(-2C\right) \sum_{k>K}^{\infty} \frac{C^{k}}{k!} \\
        &\stackrel{\eqref{E:tmp1}}{\leq}
        2 \exp\left(\tfrac{C^2}{K+2}-2C\right) 
        %\frac{1}{K!} 
 %       \sum_{k=K+1}^{\infty} \frac{C^{k}}{(K+1)^{k-K}} \\
  %      &= 2 \exp(C(C-2)) \frac{C^{K}}{ K!} \sum_{k=1}^{\infty} \frac{C^{k}}{(K+1)^{k}} \\
%        &= 2 \exp(C(C-2)) 
        \frac{C^{K+1}}{ K!(K+1-C)}. \qedhere
        %:= f(K).\textcolor{red}{\text{Pas la meme fonction $f = f_\tau$?} }
\end{align*}

\end{proof}

While~\eqref{E:relativeerror1} is the error of approximation of \(\exp(-\tau \mathbf{L})x\), \emph{relative to the input %signal 
energy} \(\|x\|_2^2\), an alternative is to measure this error w.r.t. \emph{the output %signal
energy} \(\|\exp(-\tau\mathbf{L})x\|_2^2\):
\begin{equation}
    \eta_K(x) := \frac{
        \|\exp(-\tau\mathbf{L})x-p_K(\mathbf{L})x\|_2^2
    }{
        \|\exp(-\tau\mathbf{L})x\|_2^2
    }.
\end{equation}

Since \(\|\exp(-\tau \mathbf{L})x\|_2 \geq e^{-\tau\lambda_{\max}} \|x\|_2 = e^{-2\tau}\|x\|_2\) we have \(\eta_K(x) \leq \|h_\tau-p_K\|_\infty^2 e^{4\tau}\).
Using Lemma~\ref{L:uniformbound1} we obtain for \(K>\tau/2-1\) and any \(x\):
\begin{align}
    \epsilon_K(x) & \leq  g^2(K,\tau);\label{E:Bound1}\\
    \eta_K(x) & \leq g^2(K,\tau) e^{4\tau}.\label{E:Bound2}
\end{align}

\subsection{Specific bounds on relative approximation errors}

As the bounds~\eqref{E:Bound1}-\eqref{E:Bound2} are worst-case estimates, they may be improved for a specific input signal \(x\) by taking into account its properties.
To illustrate this, let us focus on graph diffusion where \(\mathbf{L}\) is a graph Laplacian, assuming that \(a_1 := \sum_i x_i \neq 0\).
Since \(a_1/\sqrt{n}\) is the inner product between \(x\) and the unit constant vector \((1,\ldots,1)/\sqrt{n}\), which is an eigenvector of the graph Laplacian \(\mathbf{L}\) associated to the zero eigenvalue \(\lambda_1=0\), we have \(\|\exp(-\tau \mathbf{L})x\|_2^2 \geq |a_1/\sqrt{n}|^2\).
For \(K>\tau/2-1\) this leads to the bound:
\begin{equation}
    \label{E:Bound3}
    \eta_K(x) \leq \epsilon_{K}(x) \frac{\|x\|_2^2}{a_1^2/n} \leq g^2(K,\tau) \frac{n \|x\|_2^2}{a_1^2}.
\end{equation}
This bound improves upon~\eqref{E:Bound2} if
\(e^{4\tau} \geq \frac{n\|x\|_2^2}{a_1^2}\), i.e. when
\begin{equation}
    \label{E:bound_comp_cond}
    \tau \geq \frac{1}{4} \log \frac{n\|x\|_2^2}{a_1^2}.
\end{equation}

\section{Experiments}
\label{S:experiments}

% What we compute
% How we compute it
Considering a graph with Laplacian $\Lmat$, the diffusion of a graph signal $x$ at scale $\tau$ is obtained by computing $\exp(-\tau\Lmat)x$. In general, the largest eigenvalue of $\Lmat$ is not necessarily $\lambda_{\max}=2$ (except for example if $\Lmat$ is a so-called \emph{normalized} graph Laplacian, instead of a \emph{combinatorial} graph Laplacian). To handle this case with the polynomial approximations studied in the previous section, we first observe that $\exp(-\tau \Lmat) = \exp(-\tau' \Lmat')$ where $\Lmat' = 2\Lmat/\lambda_{\max}$ and $\tau' = \lambda_{\max}\tau/2$.
Using Equation~\eqref{E:bound_comp_cond} with scale \(\tau'\) allows to select which of the two bounds~\eqref{E:Bound2} or~\eqref{E:Bound3} is the sharpest. The selected bound is then used to find a polynomial order \(K\) that satisfies a given precision criterion.
Then, we can use the recurrence relations~\eqref{E:def_chebycheb_coefficients} to compute the action of the polynomials $\tilde{T}_k(\Lmat')=T_k(\Lmat'-\mathbf{Id})$ on $x$ \cite{DBLP:conf/dcoss/ShumanVF11},
and combine them with the coefficients $c_k(\tau')$ given by~\eqref{E:coeff_as_bessel_fun}.

% --- Reverse bound: minimum K to achieve a target error
\subsection{Bound tightness}

% Experiment goal, introduction of the reference bound
Our new bounds accuracy can be illustrated by plotting the minimum truncated order \(K\) required to achieve a given precision. The new bounds can be compared to the tightest bound we could find in the literature~\cite{DBLP:journals/nla/BergamaschiV00}:
\begin{equation}
    \label{berga_ac_dep}
    \eta_K(x) \leq 4 {E(K)}^{2} \frac{n \|x\|_2^2}{a_1^2}
\end{equation}
where \(a_1 = \sum_{i} x_i\), and:
\begin{equation}
    E(K) = \begin{cases}
%        e^{\frac{-b(K+1)^{2}}{4C}}\left(1+\sqrt{\frac{C\pi}{b}}\right)+\frac{d^{4C}}{1-d} & \text { if } K \leq 4C \\
        e^{\frac{-b(K+1)^{2}}{2\tau}}\left(1+\sqrt{\frac{\pi\tau/2}{b}}\right)+\frac{d^{2\tau}}{1-d} & \text { if } K \leq 2\tau \\
%        \frac{d^{K}}{1-d} & \text { if } K >4C
        \frac{d^{K}}{1-d} & \text { if } K >2\tau
    \end{cases}
\end{equation}
with \(b = \frac{2}{1 + \sqrt5}\) and \(d= \frac{\exp(b)}{2 + \sqrt5}\).
This bound can be made independent of \(x\) by using the same procedure as that of used to establish~\eqref{E:Bound2}:
\begin{equation}
    \label{berga_ss_dep}
    \eta_K(x) \leq 4 E(K)^{2} \exp(4\tau).
\end{equation}
% Figure
\begin{figure}[htbp]
    \centering
    \includegraphics[width=\columnwidth]{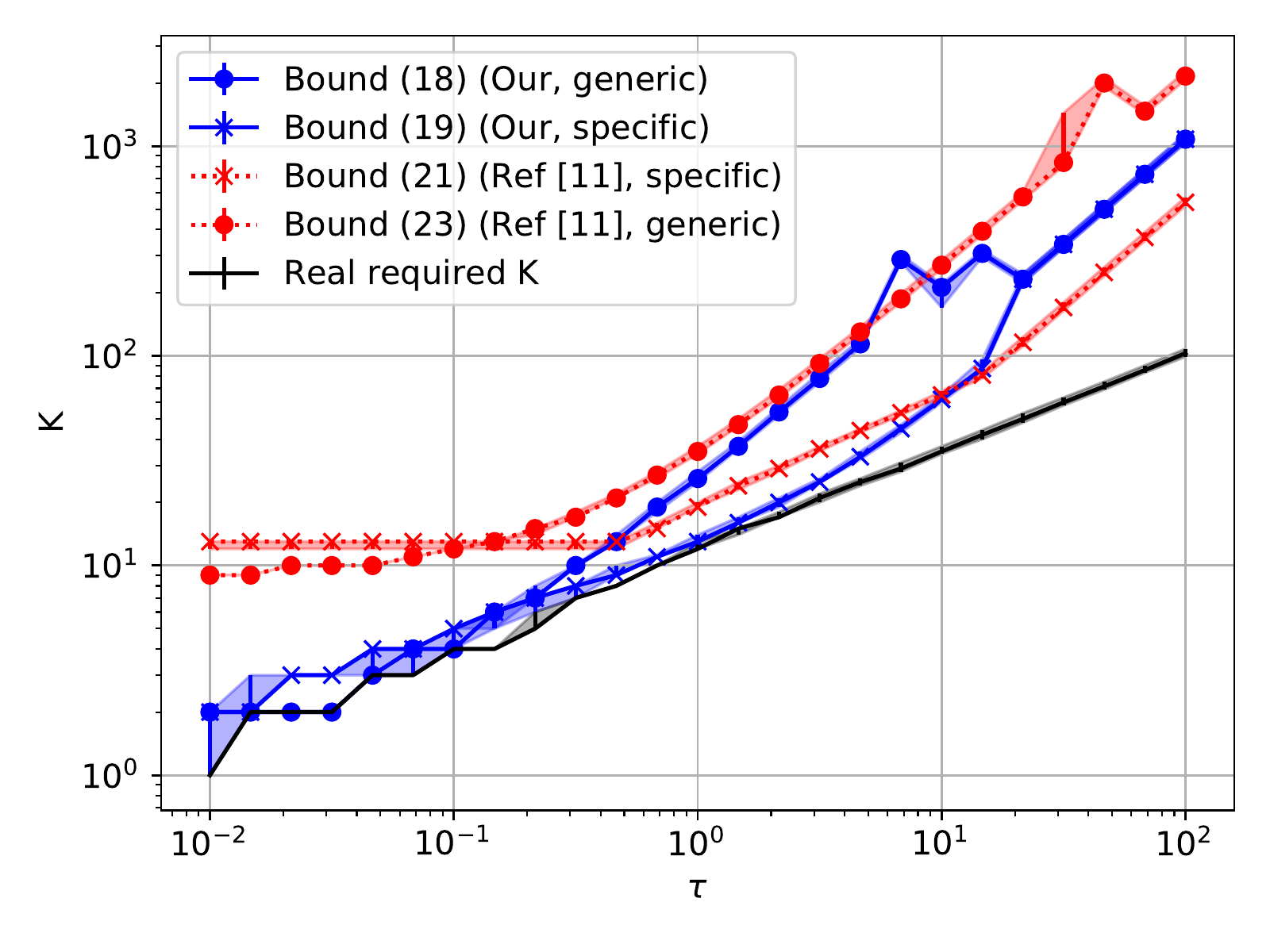}
    \caption{
        Minimum order K to achieve an error \(\eta_K(x)\) below \(10^{-5}\), either real or according to each bound.
        Median values taken for 100 Erdos-Reyni graphs of size 200 with 5\% connection probability, and a centered standard normal distributed signal.}
    \label{F:bound_examination}
\end{figure}
% Experiment setup
An experiment was performed over 25 values of \(\tau\) ranging from \(10^{-2}\) to \(10^{2}\), 100 samplings of Erdos-Reyni graphs of size \(n=200\), with connection probability \(p=5\%\) (which yields \(\lambda_{max} \simeq 20\)), and coupled with a random signal with entries drawn i.i.d. from a centered standard normal distribution. 
For each set of experiment parameters, for each bound, generically noted $B(K,\tau,x)$, the minimum order \(K\) ensuring \(\eta_K(x) \leq B(K,\tau,x) \leq 10^{-5}\) was computed, as well as the oracle minimum degree \(K\) guaranteeing MSE \(\eta_K(x) \leq 10^{-5}\). The median values over graph generations are plotted on Fig~\ref{F:bound_examination} against \(\tau\), with errorbars using quartiles.

% Analysis
We observe that our new bounds (blue) follow more closely the true minimum \(K\) (black) achieving the targeted precision, up to \(\tau \simeq 10\), thus saving computations over the one of~\cite{DBLP:journals/nla/BergamaschiV00} (red).
Also of interest is the fact that the bounds \eqref{E:Bound3}-\eqref{berga_ac_dep} specific to the input signal are much tighter than their respective generic counterparts \eqref{E:Bound2}-\eqref{berga_ss_dep}.

% --- How fast is the method compared to others?
\subsection{Acceleration of multiscale diffusion}
\label{S:expe_speed_er}

When diffusing at multiple scales \(\{\tau_1\cdots\tau_m\}\), it is worth noting that computations can be factorized.
The order \(K\) can be computed only once (using the largest \(\tau'_i\)), as well as $\tilde{T}_k(\mathbf{L}')x$.
Eventually, the coefficients can be evaluated for all values \(\tau_i\) to generate the needed linear combinations of $\tilde{T}_k(\Lmat')x$, $0 \leq k \leq  K$.
% Who do we compare ourself to?
In order to illustrate this speeding-up phenomenon, our method is  compared to \texttt{scipy.\allowbreak{}sparse.\allowbreak{}linalg.\allowbreak{}expm\_multiply}, from the standard \texttt{SciPy} Python package, which uses a Taylor approximation combined with a squaring-and-scaling method.
See~\cite{DBLP:journals/siamsc/Al-MohyH11} for details.

% Experiment setup 1: bunny
For a first experiment, we take the Standford bunny~\cite{riener2012virtual}, a graph built from a rabbit ceramic scanning (2503 nodes and 65.490 edges, with \(\lambda_{max} \simeq 78\)).
For the signal, we choose a Dirac located at a random node.
We compute repeatedly the diffusion from 2 to 20 scales \(\tau\) sampled in \([10^{-3}, 10^{1}]\).
Our method is set with a target error \(\eta_K \leq 10^{-5}\).
When the \(\tau\) values are linearly spaced, both methods can make use of their respective multiscale acceleration.
In this context, our method is about twice faster than \texttt{Scipy}'s; indeed, it takes 0.36\,s plus 6.1\(\times10^{-3}\)\,s per scale, while \texttt{Scipy}'s takes 0.74\,s plus 2.4\(\times10^{-3}\)\,s per scale.

On the other hand, when the \(\tau\) values are uniformly sampled at random, \texttt{SciPy} cannot make use of its multiscale acceleration.
Indeed, its computation cost %of the \texttt{Scipy's} method 
increases linearly with the number of \(\tau\)'s, with an average cost of 0.39\,s per scale.
Whereas, the additional cost for repeating our method for each new \(\tau\) is negligible (0.0094\,s on average) compared to the necessary time to initialize once and for all, the \(\tilde{T}_k(\mathbf{L}')x\) (0.30\,s).

% Experiment setup 2: ogbn-arxiv
The trend observed here holds for larger graphs as well.
We run a similar experiment on the \texttt{ogbn-arxiv} graph from the OGB datasets~\cite{hu2021open}.
We take uniformly sampled scales in \([7.6\times 10^{-2}, 2.4 \times 10^{-1}]\) (following recommendations of~\cite{Donnat_2018}), and set our method for \(\eta_K \leq 10^{-3}\).
We observe an average computation time of 504\,s per scale (i.e. 1\,hr and 24\,min for 10 scales) for \texttt{Scipy}'s method, and 87\,s plus 50\,s per scale for our method (i.e. around 9\,min for 10 scales).
If we impose a value $\eta_K \leq 2^{-24}$, comparable to the floating point precision achieved by Scipy, the necessary polynomial order $K$ only increases by 6\%, which does not jeopardise the computational gain of our method.
This behavior gives insight into the advantage of using our fast approximation for addressing the multiscale diffusion on very large graphs.

% Remark on system specs.
All experiments are in Python using \texttt{NumPy} and \texttt{SciPy}.
They ran on a Intel-Core i5-5300U CPU with 2.30GHz processor and 15.5 GiB RAM on a Linux Mint 20 Cinnamon.

\section{Conclusion}
\label{S:conclusion}
Our contribution is twofold: first, using the now classical Chebyshev approximation of the exponential function, we significantly improved the state of the art theoretical bound used to determine the minimum polynomial order needed for an expected approximation error. 
Second, in the specific case of the heat diffusion kernel applied to a graph structure, we capitalized on the polynomial properties of the Chebyshev coefficients to factorize the calculus of the diffusion operator, reducing thus drastically its computational cost when applied for several values of the diffusion time. 

The first contribution is particularly important when dealing with the exponential of extremely large matrices, not necessarily coding for a particular graph.
As our new theoretical bound guarantees the same approximation precision for a polynomial order downsized by up to one order of magnitude, the computational gain is considerable when modeling the action of operators on large mesh grids, as it can be the case, for instance, in finite element calculus.  

Our second input is directly related to our initial motivation in \cite{barbe2020graph} that was to identify the best diffusion time $\tau$ in an optimal transport context.
Thanks to our accelerated algorithm, we can afford to repeatedly compute the so-called Diffused Wasserstein distance to find the optimal domain adaptation between graphs' measures. 

%\enlargethispage{1cm}
\section*{Acknowledgment}
The authors wish to thank Nicolas Brisebarre for discussions that helped sharpening some bounds, as well as Hakim Hadj-Djilani for discussions on python implementations.

\bibliographystyle{IEEEtran}
%\bibliography{matrix_exponential}

% Generated by IEEEtran.bst, version: 1.14 (2015/08/26)

\end{document}